\newcommand{\sech}{\operatorname{sech}}
\newcommand{\pa}{\partial}
\newcommand{\Res}{\mathrm{Res}} 
\newcommand{\Wr}{\mathrm{Wr}}
\newcommand{\af}{\alpha}
\newcommand{\dt}{\delta}
\newcommand{\ta}{\tau}
\newcommand{\tht}{\theta}
\newcommand{\Om}{\Omega}
\newcommand{\ld}{\lambda}
\newcommand{\mc}{\mathcal}
\newcommand{\invp}{D^{-1}}
\newtheorem{prop}{Proposition}
\newtheorem{lemm}{Lemma}
\newtheorem{thm}{Theorem}
\newtheorem{eg}{Example}
\newtheorem{rmk}{Remark}
\newtheorem{defn}{Definition}
\begin{document}

\title{A Generalized Dressing Approach for Solving the Extended KP and the
  Extended mKP Hierarchy}

\author{Xiaojun Liu} \email{tigertooth4@gmail.com} \affiliation{Department of
  Applied Mathematics, China Agricultural University, Beijing, 100083, PRC}

\author{Runliang Lin} \email{rlin@math.tsinghua.edu.cn} \affiliation{Department
  of Mathematical Sciences, Tsinghua University, Beijing, 100084, PRC}

\author{Bo Jin} \email{jblh98@163.com} \affiliation{Department
  of Mathematical Sciences, Tsinghua University, Beijing, 100084, PRC}

\author{Yunbo Zeng}\email{yzeng@math.tsinghua.edu.cn} \affiliation{Department of
  Mathematical Sciences, Tsinghua University, Beijing, 100084, PRC}

\begin{abstract}
  A combination of dressing method and variation of constants as well as a
  formula for constructing the eigenfunction is used to solve the extended KP
  hierarchy, which is a hierarchy with one more series of time-flow and based on
  the symmetry constaint of KP hierarchy. Similarly, extended mKP hierarchy is
  formulated and its zero curvature form, Lax representation and reductions are
  presented. Via gauge transformation, it is easy to transform dressing
  solutions of extended KP hierarchy to the solutions of extended mKP
  hierarchy. Wronskian solutions of extended KP and extended mKP hierarchies are
  constructed explicitly. 
\end{abstract}

\maketitle
\section{Introduction}

KP hierarchy is of fundamental importance not only in the theory of integrable
systems but also in mathematical physics \cite{DJKM81,DJKM82,SS82,JM83}. Besides
interest of its own, many extensions and generalizations to KP hierarchy are
also of great interest. One of these extensions is the multi-component
generalization \cite{Leur98,KL03}. Another kind of generalization is the
so-called KP equation with self-consistent sources (KPSCS), which was discovered
by Mel'nikov \cite{Mel83,Mel87,Mel88} and appeared later in \cite{ZK86} as a
multi-scale expansion of matrix analogy of the KP equation.
Recently, we proposed an approach to construct an \emph{extended KP hierarchy}
(exKPH) \cite{LZL08}. Inspired by the squared eigenfunction symmetry constraint
of KP hierarchy, we introduced the new $\ta_k$-flow by ``extending'' a specific
$t_k$-flow of KP hierarchy. Then we find the exKPH consisting of $t_n$-flow of
KP hierarchy, $\ta_k$-flow and the $t_n$-evolutions of eigenfunctions and
adjoint eigenfunctions. The commutativity of $t_n$-flow and $\ta_k$-flow gives
rise to zero curvature representation for exKPH. Duing to the introduction of
$\ta_k$-flowm the exKPH contains two time series $\{t_n\}$ and $\{\ta_k\}$ and
more components by adding eigenfunctions and adjoint eigenfunctions.

The exKPH contains the first type and second type of KPSCS presented in
\cite{Mel83,Mel87,Mel88}. By $t_n$-reduction and $\ta_k$-reduction, the exKPH
reduces to the Gelfand-Dickey hierarchy with self-consistent sources and
constrained KP hierarchy, respectively. For the $\ta_0$ case, the exKPH with
$k=0$ gives rise to the system considered in \cite{Oevel93,OC98,ANP98}.

Since symmetry constraints are common objects for integrable systems, this idea
for building exKPH was applied to many other KP like 2+1 dimensional
hierarchies, such as BKPH \cite{WLZ07-2}, CKPH \cite{WLZ07}, dispersionless
KPH\cite{WLZ07-3} and even q-deformed KP hierarchy \cite{LLZ07} and
semi-discrete system like two dimensional Toda lattice hierarchy
\cite{LZL07}. The corresponding extended (2+1)-dimensional integrable systems
and their reductions give both well-known and new integrable models.



The \emph{dressing} method is an important tool for solving GD and KP hierarchy
\cite{Dickey-bk03}. However this method can not be applied directly for solving
the ``extended'' hierarchy. For such reason, modifications to the traditional
dressing method is needed. In this paper, with the combination of variation of
constants method, a generalization to the dressing method is proposed, which is
based on the dressing method for GD and KP hierarchy \cite{Dickey-bk03} and the
approach for finding Wronskian solutions to constrained KP hierarchy
\cite{OS96}. In this way, we can solve the entire hierarchy of extended KP in an
unified and simple manner. As the special cases, the \emph{both} types of KPSCS,
i.e. (\ref{eqns:KPSCS}) and (\ref{eqns:Another-KPSCS}), are solved
simultaneously.

As another part of our paper, we propose the extended mKP hierarchy
(exmKPH). Similar with the extended KP hierarchy, we introduce the $\ta_k$-flow
by the inspiration of squared eigenfunction symmetries constraints
($q_i\pa^{-1}r_i\pa$) \cite{OC98} of mKPH. A special case of exmKPH (called
non-standard exmKPH) is obtained by choosing specific $q_i$ and $r_i$. Two types
of reductions, say $t_n$ and $\ta_k$ reduction of exmKPH are discussed. With
$t_n$ reduction, we obtain 1+1 hierarchies, including mKdV equation with
self-consistent sources. With $\ta_k$ reduction, we obtain constrained mKP
hierarchies discussed in \cite{OS93, XZ05}. A gauge transformation between exKPH
and exmKPH is presented, based on the gauge transformation constructed by
\cite{OS93,XZ06}. Since solutions for exKPH is obtained, this transformation
helps us to show the explicit formulation of Wronskian solutions for the exmKPH.

Our paper will be organized as follows. In section 2, we briefly recall the
extended KP hierarchy. In section 3, we present the generalized dressing
approach. In section 4, we present the new extended mKP hierarchy and its
reductions. In section 5, we give gauge transformation between the exKPH and
exmKPH. In section 6, we construct some solutions of exKPH and exmKPH. In the
last section, we give conclusion.

\section{The extended KP hierarchy}
\label{sec: exKP}
In order to make our paper self-contained, we introduce the extended KP
hierarchy \cite{LZL08} briefly. As well known, the Lax equation of KP hierarchy
is given by
\begin{equation}
  \label{eq:Lax}
  \pa_{t_n}L=[B_n, L],\quad \text{for} \;n\ge1,
\end{equation}
where $L = \pa+u_1\pa^{-1}+u_2\pa^{-2}+\cdots$ is a pseudo-differential operator
(PDO). $B_n=L^n_{\ge 0}$ stands for the projection of PDO to its non-negative power
part. The commutativity of $\pa_{t_n}$ and $\pa_{t_k}$ flow give the
zero-curvature equations of KP hierarchy
\begin{displaymath}
  B_{n,t_k}-B_{k, t_n}+[B_n, B_k]=0, \quad n,k>0
\end{displaymath}

It is known \cite{KSS-contr91,Cheng-92} that the squared eigenfunction symmetry
constraint given by
\begin{align*}
  L^k&=B_k+\sum_{i=1}^Nq_i\pa^{-1}r_i\\
  q_{i,t_n}&=B_n(q_i)\\
  r_{i,t_n}&=-B_n^*(r_i),\quad i=1,\ldots,N
\end{align*}
is compatible with the KP hierarchy and reduces the KP hierarchy to the
$k$-constrained KP hierarchy. Here the ${}^*$ denotes the adjoint operator. In
this paper, we use $P(f)$ to denote an action of differential operator $P$ on
the function $f$, while $P f$ means the multiplication of differential operator
$P$ and zero order differential operator $f$.

Inspired by squared eigenfunction symmetry constraints, we presented the
following integrable extended KP hierarchy (exKPH) in \cite{LZL08}.
\begin{subequations}
  \label{eqns:Lax-ext}
  \begin{align}
    &\pa_{\ta_k}L=[B_k+\sum_{i=1}^Nq_i\pa^{-1}r_i,L],\quad N\ge0 
    \label{eqn:tau-flow-extKP}\\
    &\pa_{t_n}L=[B_n,L],\quad \forall n\neq k\label{eqn:t-flow-extKP}\\
    &\pa_{t_n}q_i=B_n(q_i),\label{eqn:q-extKP} \\
    &\pa_{t_n}r_i=-B_n^*(r_i),\quad i=1,\ldots,N.\label{eqn:r-extKP}
  \end{align}
\end{subequations}

By showing the commutativity of (\ref{eqn:tau-flow-extKP}) and
(\ref{eqn:t-flow-extKP}) under (\ref{eqn:q-extKP}) and (\ref{eqn:r-extKP}),  the
zero curvature equation for exKPH (\ref{eqns:Lax-ext}) is
\begin{subequations}
  \label{eqns:zc-exKP}
  \begin{align}
    &B_{n,\ta_k}-B_{k,t_n}+[B_n,B_k]-\sum_{i=1}^N[q_i\pa^{-1}r_i,B_n]_{\ge0}=0
    \label{eqn:zc-exKP-main}\\
    &q_{i,t_n}=B_n(q_i),\label{eqn:zc-exKP-q}\\
    &r_{i,t_n}=-B_n^*(r_i), \quad i=1,\ldots,N.\label{eqn:zc-exKP-r}
  \end{align}
\end{subequations}
It is easy to see when $N=0$ equations (\ref{eqns:Lax-ext}) and
(\ref{eqns:zc-exKP}) returns back to the ordinary KP hierarchy. Under
\eqref{eqn:zc-exKP-q} \eqref{eqn:zc-exKP-r} the Lax pair for
(\ref{eqn:zc-exKP-main}) is
\begin{align}
  \label{eqn:zc-exKP-LaxPair}
  \Psi_{t_n}=B_n(\Psi),\quad \Psi_{\ta_k}=(B_k+\sum_{i=1}^Nq_i\pa^{-1}r_i)(\Psi).
\end{align}
The extended KP is closely related to the KP equation with self-consistent
sources (KPSCS), for instance:
\begin{eg}
  When $n=2$ and $k=3$, (\ref{eqns:zc-exKP}) provides the first type of KPSCS.
   \begin{subequations}
   \label{eqns:KPSCS}
    \begin{align}
      &(4u_{t}-12uu_x-u_{xxx})_x-3u_{yy}+4\sum_{i=1}^N(q_ir_i)_{xx}=0,\label{eqn:KPSCS-main}\\
      &q_{i,y}=q_{i,xx}+2uq_i,\quad i=1,\ldots,N,\\
      &r_{i,y}=-r_{i,xx}-2ur_i.
    \end{align}
  \end{subequations}
  While for $n=3$ and $k=2$, it gives the second type of  KPSCS.
  \begin{subequations}
    \label{eqns:Another-KPSCS}
    \begin{align}
      &4u_t-12uu_{x}-u_{xxx}-3\invp
      u_{yy}=3\sum_{i=1}^N[q_{i,xx}r_i-q_ir_{i,xx}+(q_ir_i)_y],
      \label{eqn:anoth-KPSCS-main}\\
      &q_{i,t}=q_{i,xxx}+3uq_{i,x}+\frac32q_i\invp u_y+\frac32 q_i\sum_{j=1}^Nq_jr_j+\frac32 u_xq_i,\\
      &r_{i,t}=r_{i,xxx}+3ur_{i,x}-\frac32r_i\invp u_y-\frac32
      r_i\sum_{j=1}^Nq_jr_j+\frac32 u_xr_i,
    \end{align}
  \end{subequations}
  respectively, where $\invp$ stands for the inverse of $\frac{d}{dx}$.
\end{eg}

\begin{rmk}
  In \cite{OC98} the author considered the following flow generated by
  \begin{align*}
    L_\ta &= [q\pa^{-1} r, L]\\
    q_{t_n} &= B_n(q)\\
    r_{t_n} & = -B_n^*(r).
  \end{align*}
  They proved the $\pa_{\ta}$-flow commutes with all the $t_n$-flow of
  (\ref{eq:Lax}). So the whole system is compatible. This system can be regarded
  as a special case of (\ref{eqns:Lax-ext}), if we consider $k=0$ and $\ta_0=\ta$.
\end{rmk}

\section{Generalized Dressing Approach and Variation of Constants for exKPH}

Inspired by \cite{Dickey-bk03} and \cite{OS96}, we consider the dressing
formulation of exKPH. We assume that $L$ operator of exKPH can be written in the
dressing form: $$L=W\pa W^{-1},$$ where $ W = 1 + w_1\pa^{-1} + w_2 \pa^{-2} +
\cdots$ is called a \emph{dressing operator}. Then in terms of $W$, exKPH can be
written as
\begin{subequations}
  \label{eqns:W-evolu}
  \begin{align}
    \pa_{t_n}W&=-L^n_-W,\quad n\neq k\label{eqn:W-tn}\\
    \pa_{\ta_k}W&=-L^k_-W+\sum_{i=1}^Nq_i\pa^{-1}r_iW\label{eqn:W-tauk}
  \end{align}
\end{subequations}
where $q_i$ and $r_i$ satisfy \eqref{eqn:q-extKP} and \eqref{eqn:r-extKP}
respectively. The simplest $W$ has finite many terms, which is $W= 1+
w_1\pa^{-1} + w_2\pa^{-2} + \cdots + w_N\pa^{-N}$.  It is equivalent to assume
that $W$ is a pure differential operator of order $N$ ( Since it is equivalent
to multiplying $\pa^N$ from right side to the above expression ).

Let $h_1,\ldots,h_N$ be linearly independent functions satisfying
\begin{equation}
  \label{eqn:h}
  \pa_{t_n}h_i=\pa^n(h_i),\quad i=1,\ldots,N.
\end{equation}
It is known \cite{Dickey-bk03} that the \emph{Wronskian determinant}
\begin{displaymath}
  \Wr(h_1,\ldots,h_N)=\left|
  \begin{matrix}
    h_1 & h_2 & \cdots & h_N\\
    h_1'& h_2'& \cdots & h_N'\\
    \vdots & \vdots & \vdots & \vdots\\
    h_1^{(N-1)} & h_2^{(N-1)} & \cdots & h_N^{(N-1)}
  \end{matrix}
  \right|
\end{displaymath}
is a $\ta$-function of KP hierarchy and the $N$-th order differential operator
given by
\begin{equation}
  \label{eqn:W}
  W=\frac1{\Wr(h_1,\cdots,h_N)}
  \left|  
    \begin{matrix}
      h_1 & h_2 & \cdots & h_N & 1\\
      h_1'& h_2'& \cdots & h_N'& \pa\\
      \vdots & \vdots & \vdots & \vdots & \vdots\\
      h_1^{(N)} & h_2^{(N)} & \cdots & h_N^{(N)}& \pa^N
    \end{matrix}
  \right|
\end{equation}
provides a dressing operator for KP hierarchy satisfying (\ref{eqn:W-tn}). The
numerator of (\ref{eqn:W}) is a formal determinant, which is denoted by
$\Wr(h_1,\cdots,h_N,\pa)$. It is understood as an expansion with respect to its
last column, in which all sub-determinants are collected on the left of the
differential symbols.

Unfortunately, the dressing operator $W$ defined by (\ref{eqn:h}) and
(\ref{eqn:W}) neither satisfy (\ref{eqn:W-tauk}), nor provide formula for $q_i$
and $r_i$. Since (\ref{eqn:W-tauk}) can be regarded as $\pa_{\ta_k}W=-L^k_-W$
with non-homogeneous term, we can provide a new dressing operator by combining
the usual dressing approach and the method of variation of constants. We also
present formulas for $q_i$ and $r_i$, which is motivated by \cite{OS96}.

Let $f_i$, $g_i$ satisfy
\begin{subequations}
\begin{align}
  \label{eq:fg}
  &\pa_{t_n}f_i=\pa^n(f_i),\quad \pa_{\ta_k}f_i=\pa^k(f_i)\quad (i=1,\ldots,N)\\
  &\pa_{t_n}g_i=\pa^n(g_i),\quad \pa_{\ta_k}g_i=\pa^k(g_i).
\end{align}
\end{subequations}
Now let $h_i$ be the linear combination of $f_i$ and $g_i$ as
\begin{equation}
  \label{eq:h'}
  h_i=f_i+\af_i(\ta_k)g_i\quad i=1,\ldots,N,
\end{equation}
with the coefficient $\af_i$ being a differentiable function of $\ta_k$. (Suppose
$h_1,\ldots, h_N$ are still linearly independent.) We call this the variation of
constants because if $\af_i$ is a constant independent of $\ta_k$ , the
formulation returns back to (\ref{eqn:h}) and (\ref{eqn:W}) for KP
case. 
Then clearly $W$ defined by (\ref{eqn:W}) and (\ref{eq:h'}) still satisfy
(\ref{eqn:W-tn}). To claim that $W$ satisfies (\ref{eqn:W-tauk}), we present
\begin{equation}\label{eqn: q-r}
  q_i=-\dot{\af}_iW(g_i)\quad
  r_i=(-1)^{N-i}\frac{\Wr(h_1,\cdots,\hat{h}_i,\cdots,h_N)}
  {\Wr(h_1,\cdots,h_N)}, i=1,\ldots, N
\end{equation}
where the hat $\hat{\;}$ means rule out this term from the Wronskian
determinant, $\dot{\af}_i=\frac{d\af_i}{d\ta_k}$, and here is a proposition.

\begin{prop}
  \label{prop:1}
  Let $W$ defined by (\ref{eqn:W}) and (\ref{eq:h'}), $L=W\pa W^{-1}$, 
  then $W$, $q_i$, $r_i$ satisfy \eqref{eqns:W-evolu}.
\end{prop}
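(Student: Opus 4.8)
The plan is to establish the three parts of \eqref{eqns:W-evolu} separately, treating \eqref{eqn:W-tn} as classical and concentrating the effort on \eqref{eqn:W-tauk}. Because $\af_i$ depends on $\ta_k$ alone, the functions $h_i=f_i+\af_i(\ta_k)g_i$ still satisfy $\pa_{t_n}h_i=\pa^n(h_i)$, so they are admissible KP data and the dressing result of \cite{Dickey-bk03} gives \eqref{eqn:W-tn}, namely $\pa_{t_n}W=-L^n_-W$, with no change. I would record at the outset the identity $L^m_-W=W\pa^m-B_mW$ (a consequence of $L^mW=W\pa^m$, i.e.\ of $L=W\pa W^{-1}$): its point is that, although $L^m_-$ is pseudo-differential, the product $L^m_-W$ is a genuine \emph{differential} operator. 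This will be used for both $m=n$ and $m=k$.

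Next I would set up the two determinant facts that encode $r_i$. Write $A_i(\phi)=\Wr(h_1,\dots,\phi,\dots,h_N)/\Wr(h_1,\dots,h_N)$ for the Cramer ratio in which $h_i$ is replaced by $\phi$, and let $\mc{A}_i$ be the corresponding differential operator of order $N-1$, whose leading coefficient is exactly the $r_i$ of \eqref{eqn: q-r}. I need (i) $A_i(h_j)=\delta_{ij}$, clear from the repeated-column vanishing of a determinant, and (ii) the operator identity $\pa\,\mc{A}_i=r_iW$, equivalently $\pa^{-1}r_iW=\mc{A}_i$. Identity (ii) I would prove by evaluating on an arbitrary $\phi$: Cramer's rule gives $\sum_i A_i(\phi)h_i^{(m)}=\phi^{(m)}$ for $0\le m\le N-1$, and differentiating these, together with the cofactor normalisation $\sum_i h_i^{(m)}r_i=\delta_{m,N-1}$, yields $\pa A_i(\phi)=r_iW(\phi)$. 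The decisive consequence of (ii) is that the a priori pseudo-differential source collapses to a differential operator, $\sum_i q_i\pa^{-1}r_iW=\sum_i q_i\mc{A}_i$, of order at most $N-1$.

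The core step is then \eqref{eqn:W-tauk}. Differentiating $h_i=f_i+\af_i g_i$ gives $\pa_{\ta_k}h_i=\pa^k(h_i)+\dot\af_i g_i$. Setting $C:=\pa_{\ta_k}W+L^k_-W$, the facts that $\pa_{\ta_k}W$ has order $\le N-1$ and that $L^k_-W=W\pa^k-B_kW$ is differential show $C$ is differential of order $\le N-1$; by (ii) the target \eqref{eqn:W-tauk} is exactly $C=\sum_i q_i\mc{A}_i$, an equality of differential operators of order $\le N-1$. Such operators agree once they agree on the independent set $h_1,\dots,h_N$. On the right, $(\sum_i q_i\mc{A}_i)(h_j)=\sum_i q_iA_i(h_j)=q_j$ by (i). On the left, differentiating $W(h_j)=0$ and using $(L^k_-W)(h_j)=W(\pa^kh_j)$ and $q_j=-\dot\af_jW(g_j)$ gives $C(h_j)=-W(\pa^kh_j)-\dot\af_jW(g_j)+W(\pa^kh_j)=q_j$. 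Matching on every $h_j$ forces $C=\sum_i q_i\mc{A}_i$, which is \eqref{eqn:W-tauk}.

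It remains to check \eqref{eqn:q-extKP}--\eqref{eqn:r-extKP}: for $q_i=-\dot\af_iW(g_i)$ a one-line computation using $\pa_{t_n}W=-L^n_-W$, $L^n_-W=W\pa^n-B_nW$, and $\pa_{t_n}g_i=\pa^n(g_i)$ collapses to $\pa_{t_n}q_i=B_n(q_i)$, while the adjoint statement $\pa_{t_n}r_i=-B_n^*(r_i)$ follows from the mirror computation on $W^*$ using the factorisation $W^{-1}=\sum_i h_i\pa^{-1}r_i$ (a repackaging of (i)--(ii)), which is the classical adjoint-Wronskian fact of \cite{OS96,Dickey-bk03}. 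I expect the main obstacle to be identity (ii): it is what makes the formal $\pa^{-1}$ in the source disappear and simultaneously pins down the Wronskian-ratio formula for $r_i$. The recurring hazard is that for pseudo-differential factors $(AB)(\phi)\neq A(B(\phi))$ in general, so I would be careful to plug the $h_j$ only into operators already rewritten in differential form; this is exactly why $L^k_-W$ and the source are converted \emph{before} any evaluation.
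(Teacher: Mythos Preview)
Your proposal is correct and follows the same architecture as the paper's proof: reduce \eqref{eqn:W-tauk} to an identity of differential operators of order $\le N-1$, then verify it on the independent kernel $h_1,\dots,h_N$ using $(\pa^{-1}r_iW)(h_j)=\delta_{ij}$. The difference lies only in how that last fact---the paper's ``Key Lemma''---is established. The paper reaches it via the Oevel--Strampp factorisation $W^{-1}=\sum_i h_i\pa^{-1}r_i$ together with the auxiliary result $W^*(r_i)=0$; you instead identify $\pa^{-1}r_iW$ explicitly as the Cramer operator $\mc{A}_i$ by differentiating the Cramer relations $\sum_i A_i(\phi)h_i^{(m)}=\phi^{(m)}$ and invoking the cofactor normalisation $\sum_i h_i^{(m)}r_i=\delta_{m,N-1}$. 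Your route is more self-contained (no appeal to $W^{-1}$ or $W^*$) and has the bonus of naming what $\pa^{-1}r_iW$ actually \emph{is}; the paper's route makes the adjoint structure visible and feeds directly into the later proof of \eqref{eqn:r-extKP}. Note, incidentally, that \eqref{eqn:q-extKP}--\eqref{eqn:r-extKP} are not part of Proposition~\ref{prop:1} in the paper (they appear in Theorem~\ref{thm:exKP}), so your last paragraph is extra but harmless.
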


\begin{rmk}
  Note that $r_1,\ldots,r_N$ defined in \eqref{eqn: q-r} satisfy the following
  linear equation
  \begin{equation}
    \label{eqn:cond-r}
    \sum_{i=1}^N h_i^{(j)} r_i=\dt_{j,N-1}, \quad j=0,1,\cdots, N-1,
  \end{equation}
  where $\dt_{j,N-1}$ is the \emph{Kronecker's delta} symbol.
\end{rmk}

To prove Proposition \ref{prop:1}, we need several lemmas. The first one is
given by Oevel and Strampp \cite{OS96}:
\begin{lemm}(Oevel \& Strampp)
  \label{lm:OS}
  $W^{-1}=\sum_{i=1}^N h_i\pa^{-1} r_i$.
\end{lemm}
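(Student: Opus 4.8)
The plan is to prove the equivalent operator identity $W\,V=1$, where $W$ is the monic $N$-th order differential operator defined by (\ref{eqn:W}) and $V:=\sum_{i=1}^N h_i\pa^{-1}r_i$. Since a monic differential operator is invertible in the ring of pseudo-differential operators, a right inverse is automatically the two-sided inverse, so this will give $W^{-1}=V$.

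First I would record two facts. (i) $W(h_i)=0$ for every $i$: applying the operator (\ref{eqn:W}) to $h_i$ amounts to replacing the last column $(1,\pa,\ldots,\pa^N)^{\!\top}$ of the formal determinant by $(h_i,h_i',\ldots,h_i^{(N)})^{\!\top}$, which reproduces the $i$-th column and hence gives a vanishing determinant. (ii) The $r_i$ satisfy $\sum_{i=1}^N h_i^{(j)}r_i=\dt_{j,N-1}$ for $j=0,\ldots,N-1$, i.e.\ relation (\ref{eqn:cond-r}); this is just Cramer's rule applied to the Wronskian-ratio definition of $r_i$ in (\ref{eqn: q-r}) (for $j\le N-2$ the relevant Laplace expansion has a repeated row and vanishes, while for $j=N-1$ it collapses to $\Wr/\Wr=1$).

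The key step is a variation-of-parameters computation of $W(V(\phi))$ for an arbitrary test function $\phi$. Writing $\psi=V(\phi)=\sum_i c_i h_i$ with $c_i=\pa^{-1}(r_i\phi)$, so that $c_i'=r_i\phi$, fact (ii) gives $\sum_i c_i' h_i^{(j)}=\bigl(\sum_i r_i h_i^{(j)}\bigr)\phi=\dt_{j,N-1}\phi$. These are exactly the relations that make the successive derivatives telescope, yielding $\psi^{(m)}=\sum_i c_i h_i^{(m)}$ for $m\le N-1$ and then, using fact (i), $W(\psi)=\sum_i c_i W(h_i)+\sum_i c_i' h_i^{(N-1)}=\sum_i c_i' h_i^{(N-1)}=\phi$. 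Hence $W(V(\phi))=\phi$ for every $\phi$.

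The point that needs care — and the main obstacle — is passing from this pointwise identity to the operator identity $W\,V=1$, because the $\pa^{-1}$ inside $V$ carries an antiderivative ambiguity. This is precisely where fact (i) is indispensable: $W(h_i)=0$ forces the order-zero coefficient of the differential operator $W\circ h_i$ to vanish, so $W\circ h_i=(\,\cdots)\pa$ and therefore $(W\circ h_i)\pa^{-1}$ is again a genuine differential operator. Consequently $W\,V=\sum_i (W\circ h_i)\pa^{-1}r_i$ contains no negative powers of $\pa$; a differential operator is determined by its action on functions, the undetermined integration constants drop out because they are multiplied by $W(h_i)=0$, and the computation above then forces $W\,V=1$. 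Invertibility of $W$ finally gives $W^{-1}=V=\sum_{i=1}^N h_i\pa^{-1}r_i$.
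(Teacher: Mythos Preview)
Your argument is correct. The paper does not supply its own proof of this lemma; it simply attributes the result to Oevel and Strampp \cite{OS96} and uses it as input for the subsequent Lemmas~\ref{lm:vanish} and~\ref{lm:key}. So there is no in-paper proof to compare against, and your write-up in fact fills that gap.

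A few comments on the approach. Your two ingredients --- $W(h_i)=0$ and the Cramer relations \eqref{eqn:cond-r} --- are exactly the right ones, and the variation-of-parameters computation is the classical way to exhibit a right inverse of a monic ordinary differential operator. The one place a reader might pause is the passage from the ``pointwise'' identity $W(V(\phi))=\phi$ to the operator identity $WV=1$, and you handle this properly: since $W(h_i)=0$ forces the zeroth-order coefficient of the differential operator $W\,h_i$ to vanish, one has $W\,h_i=Q_i\,\pa$ with $Q_i$ differential, whence $WV=\sum_i Q_i r_i$ is a genuine differential operator; its action on $\phi$ is then unambiguous and coincides with your computation (the antiderivative constants are annihilated by $W$), so $WV=1$ follows. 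This is clean and self-contained; in particular it does not presuppose Lemma~\ref{lm:vanish}, which in the paper is \emph{derived} from the present lemma, so the logical order is respected.
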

\begin{lemm}\label{lm:simple}
  For a pure differential operator $P$ and a function $f$,
  $$\Res_{\pa}\pa^{-1}fP=P^*(f).$$
\end{lemm}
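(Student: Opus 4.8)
\textbf{The plan} is to prove Lemma \ref{lm:simple} by direct computation of the residue, working with the general form of a pure differential operator and reducing everything to the single building block $\pa^{-1}f\pa^m$. First I would write $P=\sum_{m=0}^{M}a_m\pa^m$ for some functions $a_m$ and a finite order $M$, so that by linearity it suffices to establish the claim for each monomial summand $a_m\pa^m$; that is, to show $\Res_\pa\pa^{-1}f\,a_m\pa^m=(a_m\pa^m)^*(f)$ and then sum over $m$. Since $(a_m\pa^m)^*=(-1)^m\pa^m a_m$, the right-hand side contributes $(-1)^m\pa^m(a_m f)$, so the whole target is $\sum_m(-1)^m\pa^m(a_m f)$, which is exactly $P^*(f)$ once I recall the definition $P^*=\sum_m(-1)^m\pa^m a_m$ acting on $f$.

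The heart of the argument is therefore the scalar identity
\begin{equation*}
  \Res_\pa\,\pa^{-1}\,(fa_m)\,\pa^m=(-1)^m\pa^m(fa_m),
\end{equation*}
where I have absorbed the function $fa_m$ into a single function, say $\phi=fa_m$. To evaluate $\Res_\pa\pa^{-1}\phi\pa^m$ I would expand $\pa^{-1}\phi$ using the generalized Leibniz rule for pseudo-differential operators,
\begin{equation*}
  \pa^{-1}\phi=\sum_{j\ge0}(-1)^j\phi^{(j)}\pa^{-1-j},
\end{equation*}
and then multiply on the right by $\pa^m$ to get $\pa^{-1}\phi\pa^m=\sum_{j\ge0}(-1)^j\phi^{(j)}\pa^{m-1-j}$. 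The residue picks out the coefficient of $\pa^{-1}$, which forces $m-1-j=-1$, i.e.\ $j=m$, leaving precisely $(-1)^m\phi^{(m)}=(-1)^m\pa^m(\phi)$. Summing over $m$ recovers $P^*(f)$ and completes the proof.

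\textbf{The main obstacle} is mostly bookkeeping rather than a genuine difficulty: I must be careful that the Leibniz expansion of $\pa^{-1}\phi$ is the correct one (signs and the shift in powers) and that multiplying by the \emph{purely differential} operator $\pa^m$ on the right keeps every resulting term with a well-defined single power of $\pa$, so that extracting the $\pa^{-1}$ coefficient is unambiguous. The hypothesis that $P$ is a pure differential operator is what guarantees the sum over $m$ is finite and that no extra negative-power tails interfere, so I would emphasize where that assumption is used. A small point worth checking explicitly is the $m=0$ case, $\Res_\pa\pa^{-1}\phi=\phi$, which anchors the induction-free direct computation and matches $(\phi)^*=\phi$ for a zeroth-order operator.
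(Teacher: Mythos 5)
Your proof is correct. The paper actually states this lemma without any proof, treating it as an elementary fact about pseudo-differential operators, and your computation—decomposing $P=\sum_m a_m\pa^m$, expanding $\pa^{-1}\phi=\sum_{j\ge0}(-1)^j\phi^{(j)}\pa^{-1-j}$ by the Leibniz rule, and matching the coefficient of $\pa^{-1}$ at $j=m$ against $(a_m\pa^m)^*(f)=(-1)^m\pa^m(a_m f)$—is precisely the standard argument the authors implicitly assume, so it fills the omission correctly.
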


\begin{lemm}
  \label{lm:vanish}
  $W^*$ annihilates each $r_i$, i.e. $W^*(r_i)=0$ for $i=1,\ldots,N$.
\end{lemm}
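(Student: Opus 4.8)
The plan is to set $\phi_i:=W^*(r_i)$ and to show each $\phi_i$ vanishes by producing enough scalar relations among $h_1,\dots,h_N$ and the $\phi_i$ to invoke the nonvanishing of the Wronskian. The starting point is Lemma \ref{lm:OS}, which gives $W^{-1}=\sum_{i=1}^N h_i\pa^{-1}r_i$; since $W$ is invertible this yields the operator identity $\sum_{i=1}^N h_i\pa^{-1}r_i\,W=1$. This is the bridge that lets me trade the adjoint action $W^*(r_i)$ for a residue, via Lemma \ref{lm:simple}.

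First I would right-multiply the identity $\sum_i h_i\pa^{-1}r_i W=1$ by $\pa^p$ for an arbitrary integer $p\ge0$, obtaining $\sum_i h_i\pa^{-1}r_i\,(W\pa^p)=\pa^p$, where $W\pa^p$ is again a pure differential operator. Taking $\Res_\pa$ of both sides and using that left multiplication by the function $h_i$ commutes with extracting the coefficient of $\pa^{-1}$, the right-hand side contributes $\Res_\pa\pa^p=0$, while Lemma \ref{lm:simple} applied with $f=r_i$ and $P=W\pa^p$ turns the left-hand side into $\sum_i h_i\,(W\pa^p)^*(r_i)$. Since $(W\pa^p)^*=(-1)^p\pa^p W^*$, this reads $\sum_i h_i\phi_i^{(p)}=0$ after cancelling the harmless factor $(-1)^p$. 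Thus I obtain the whole family of scalar identities
\begin{equation*}
  \sum_{i=1}^N h_i\,\phi_i^{(p)}=0,\qquad p=0,1,2,\dots
\end{equation*}

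The remaining task is to convert these relations, in which the derivatives sit on the $\phi_i$, into the Wronskian system in which the derivatives sit on the $h_i$. I would do this by a double-indexed induction: differentiating $\sum_i h_i^{(q)}\phi_i^{(p)}=0$ in $x$ gives $\sum_i h_i^{(q+1)}\phi_i^{(p)}+\sum_i h_i^{(q)}\phi_i^{(p+1)}=0$, so if all identities at level $q$ vanish then so do those at level $q+1$; the base case $q=0$ is exactly the family above. Specializing to $p=0$ yields $\sum_{i=1}^N h_i^{(q)}\phi_i=0$ for $q=0,1,\dots,N-1$, a homogeneous linear system for $(\phi_1,\dots,\phi_N)$ whose coefficient matrix is the Wronskian matrix of $h_1,\dots,h_N$. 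Because these functions are assumed linearly independent, $\Wr(h_1,\dots,h_N)\ne0$, so the system has only the trivial solution and $\phi_i=W^*(r_i)=0$ for every $i$.

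I expect the main obstacle to be precisely this last bookkeeping step: the residue computation delivers relations with the derivatives loaded onto the $\phi_i$ rather than onto the $h_i$, so one cannot read off the Wronskian system directly. The induction on the quantities $\sum_i h_i^{(q)}\phi_i^{(p)}$ is the device that shifts the derivatives across and exposes the Wronskian, and some care is needed to ensure the family of identities is available for all $p\ge0$ (which it is, since the residue argument works for every $p$) before specializing to $p=0$.
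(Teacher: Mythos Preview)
Your argument is correct. It differs from the paper's proof only in a small but notable tactical choice. The paper starts from the \emph{adjoint} identity $W^*(W^{-1})^*\pa^j=\pa^j$; since $(h_i\pa^{-1}r_i)^*=-r_i\pa^{-1}h_i$, the functions $h_i$ end up adjacent to $\pa^j$, and the residue (via Lemma~\ref{lm:simple}) directly yields $\sum_i h_i^{(j)}W^*(r_i)=0$, i.e.\ the Wronskian system itself, with no further manipulation needed. You instead work with $W^{-1}W\pa^p=\pa^p$ without passing to adjoints, which places $r_i$ next to $W\pa^p$ and hence loads the derivatives onto $\phi_i=W^*(r_i)$, producing $\sum_i h_i\,\phi_i^{(p)}=0$. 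Your induction on $A_{q,p}=\sum_i h_i^{(q)}\phi_i^{(p)}$ (using $\pa_x A_{q,p}=A_{q+1,p}+A_{q,p+1}$) then correctly shifts the derivatives back onto the $h_i$ to recover the Wronskian system. Both routes are short; the paper's is a one-liner because taking the adjoint up front is exactly what avoids the derivative-shifting bookkeeping you identified as the main obstacle.
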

\begin{proof}
  Expanding the identity $W^*(W^{-1})^*\pa^j=\pa^j$ by using Lemma \ref{lm:OS},
  and taking the residue, we have
  \begin{displaymath}
    0= \Res_{\pa}W^*(\sum_{i=1}^Nh_i\pa^{-1}r_i)^*\pa^j
    =-\Res_{\pa}W^*\sum_{i=1}^Nr_i\pa^{-1}h_i\pa^j
    =(-1)^{j+1}\sum_{i=1}^Nh_i^{(j)}W^*(r_i)
  \end{displaymath}
  According to the last equality, $W^*(r_i)$ vanishes.

\end{proof}
\begin{lemm} (\emph{Key lemma})
  \label{lm:key}
  The operator $\pa^{-1} r_i W$ is a pure differential operator for each
  $i$. Furthermore, for $1\le i,j\le N$
  \begin{equation}
    \label{eq:key-eq}
    (\pa^{-1} r_i W)(h_j)=\dt_{ij}.
  \end{equation}
\end{lemm}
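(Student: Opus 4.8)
The plan is to split the statement into its two claims and to establish the pure-differential claim first, since it is exactly what makes the expression $(\pa^{-1}r_iW)(h_j)$ unambiguous. To show that $\pa^{-1}r_iW$ has no negative part, I would pass to the adjoint: since $(\pa^{-1})^*=-\pa^{-1}$ and $r_i$ is a multiplication (hence self-adjoint) operator, one has $(\pa^{-1}r_iW)^*=-W^*r_i\pa^{-1}$. Now $W^*r_i$ is an honest differential operator of order $N$, say $W^*r_i=\sum_{m=0}^N b_m\pa^m$, so that $W^*r_i\pa^{-1}=\sum_{m=0}^N b_m\pa^{m-1}$ contains exactly one potentially negative term, namely $b_0\pa^{-1}$. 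The coefficient $b_0$ is the zero-order term of $W^*r_i$, i.e. $b_0=(W^*r_i)(1)=W^*(r_i)$, which vanishes by Lemma \ref{lm:vanish}. Hence $W^*r_i\pa^{-1}$, and therefore its adjoint $\pa^{-1}r_iW$, is a pure differential operator; I will call it $P_i$ (of order $N-1$).

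With $P_i$ now a genuine differential operator, the two remaining ingredients are the standard fact that the Wronskian dressing operator annihilates its defining functions, $W(h_j)=0$ (the determinant \eqref{eqn:W} acquires a repeated column once the $\pa$-column is replaced by the $h_j$-column), together with Lemma \ref{lm:OS}. From the operator identity $\pa P_i=\pa\pa^{-1}r_iW=r_iW$ I would obtain $\pa\bigl(P_i(h_j)\bigr)=r_i\,W(h_j)=0$, so each $P_i(h_j)$ is independent of $x$; write $P_i(h_j)=c_{ij}$. To pin down these constants, I would use Lemma \ref{lm:OS} in the form $\sum_{i=1}^N h_iP_i=\bigl(\sum_{i=1}^N h_i\pa^{-1}r_i\bigr)W=W^{-1}W=1$. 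Applying this operator identity to $h_j$ yields $\sum_{i=1}^N c_{ij}h_i=h_j$, and linear independence of $h_1,\dots,h_N$ (equivalently invertibility of the Wronskian matrix, whose $j$-th column is exactly the right-hand side after the system is differentiated up to order $N-1$) forces $c_{ij}=\dt_{ij}$.

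The main obstacle is the pure-differential claim: without it the expression $(\pa^{-1}r_iW)(h_j)$ is not well defined, and indeed the naive factored evaluation $\pa^{-1}\bigl(r_i\,W(h_j)\bigr)=\pa^{-1}(0)$ would wrongly suggest the value is $0$ rather than $\dt_{ij}$. The resolution is precisely to recognize the integration-constant ambiguity: once $P_i$ is known to be a differential operator, $P_i(h_j)$ is a bona fide function that is merely $x$-independent, and its (generally nonzero) value must be recovered by an independent argument such as the operator identity $\sum_i h_iP_i=1$ above. The adjoint reduction combined with Lemma \ref{lm:vanish} is the crux that removes this ambiguity; everything after it is linear algebra.
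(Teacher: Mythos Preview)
Your proof is correct and follows essentially the same route as the paper: the paper writes the pure-differential claim as the single line $(\pa^{-1}r_iW)_-=\pa^{-1}[W^*(r_i)]=0$ (which is exactly your adjoint computation read backwards), and then, just as you do, sets $c_{ij}=(\pa^{-1}r_iW)(h_j)$, shows $\pa_x c_{ij}=0$, and uses $W^{-1}=\sum_i h_i\pa^{-1}r_i$ together with the invertibility of the Wronskian matrix to conclude $c_{ij}=\dt_{ij}$. The only cosmetic difference is that the paper derives $\sum_i h_i^{(k)}c_{ij}=h_j^{(k)}$ for all $k$ in one stroke by applying $\pa^k$ to the operator identity, whereas you obtain the $k=0$ relation and then note that differentiation yields the full system.
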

\begin{proof}
  Because $(\pa^{-1} r_i W)_-=\pa^{-1} [W^*(r_i)]=0$ (by lemma \ref{lm:vanish}),
  $\pa^{-1}r_iW$ is a pure differential operator. We can define functions
  $c_{ij}=(\pa^{-1}r_iW)(h_j)$ and the $x$-derivative of $c_{ij}$ shows
  $\pa(c_{ij})=r_iW(h_j)=0$, which means $c_{ij}$ does not depend on $x$. Then
  \begin{displaymath}
    \sum_{i=1}^N h_i^{(k)}c_{ij}=\pa^k(\sum_i h_i
    c_{ij})=\pa^k(\sum_i(h_i\pa^{-1} r_i W)(h_j))
    =\pa^k(W^{-1}W)(h_j)=h_j^{(k)},
  \end{displaymath}
  so $c_{ij}=\dt_{ij}$.
\end{proof}
\begin{rmk}
  Lemma \ref{lm:key} is crucial to the proof of Propostion \ref{prop:1} and
  upcoming results. We believe it is important also because it shows the
  inherent connection between $W$, the function $h_i (\in ker W)$, and the
  function $r_i (\in ker W^*)$.
\end{rmk}

\begin{proof}[Proof of Proposition \ref{prop:1}]
  The proof of (\ref{eqn:W-tn}) was given in \cite{Dickey-bk03} (pp 80).  In a
  same way, taking $\pa_{\ta_k}$ to the identity $W(h_i)=0$, using \eqref{eq:fg}
  \eqref{eq:h'}, the definition of $q_i$ and the \emph{key lemma}, we have
  \begin{align*}
    0=&(\pa_{\ta_k}W)(h_i)+(W\pa^k)(h_i)+\dot\af_i W(g_i)\\
    =&(\pa_{\ta_k}W)(h_i)+(L^kW)(h_i)-\sum_{j=1}^Nq_j\dt_{ji}\\
    =&(\pa_{\ta_k}W+L^k_-W-\sum_{j=1}^Nq_j\pa^{-1}r_jW)(h_i).
  \end{align*}
  Since the pure differential operator acting on $h_i$ in the last expression
  has degree $<N$, it can not annihilate $N$ independent functions unless the
  operator itself vanishes. Hence (\ref{eqn:W-tauk}) is proved.
\end{proof}

\begin{thm}
  \label{thm:exKP} 
  $W$ defined by (\ref{eqn:W}) and (\ref{eq:h'}), $q_i$ and $r_i$ defined by
  (\ref{eqn: q-r}), $L=W\pa W^{-1}$ satisfy the extended KP hierarchy
  (\ref{eqns:Lax-ext}).
\end{thm}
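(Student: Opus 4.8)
The plan is to show that $W$, $q_i$, $r_i$ as defined satisfy each of the four defining equations of the extended KP hierarchy \eqref{eqns:Lax-ext}. The central observation is that most of the work has already been done: Proposition \ref{prop:1} establishes that $W$, $q_i$, $r_i$ satisfy the dressing equations \eqref{eqns:W-evolu}, and it is a standard fact from the theory of the dressing method \cite{Dickey-bk03} that the Lax equations \eqref{eqn:tau-flow-extKP} and \eqref{eqn:t-flow-extKP} for $L=W\pa W^{-1}$ are algebraic consequences of the corresponding evolution equations \eqref{eqn:W-tauk} and \eqref{eqn:W-tn} for $W$. So the bulk of the theorem reduces to translating the $W$-evolution into $L$-evolution, which I would carry out first.

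\medskip

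Concretely, for the $t_n$-flow I would differentiate $L=W\pa W^{-1}$ with respect to $t_n$, substitute $\pa_{t_n}W=-L^n_-W$ from \eqref{eqn:W-tn}, and use $\pa_{t_n}(W^{-1})=-W^{-1}(\pa_{t_n}W)W^{-1}=W^{-1}L^n_-$ to obtain
\begin{displaymath}
  \pa_{t_n}L = -L^n_-W\pa W^{-1} + W\pa W^{-1}L^n_- = -L^n_- L + L L^n_- = [L^n_-,L]=[B_n,L],
\end{displaymath}
where the last equality uses $[L^n,L]=0$ together with $L^n=B_n+L^n_-$. This gives \eqref{eqn:t-flow-extKP}. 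For the $\ta_k$-flow the computation is identical in structure but starts from \eqref{eqn:W-tauk}: writing $\pa_{\ta_k}W=(-L^k_-+\sum_i q_i\pa^{-1}r_i)W$ and abbreviating $M=-L^k_-+\sum_i q_i\pa^{-1}r_i$, the same differentiation yields $\pa_{\ta_k}L=[M,L]=[-L^k_-+\sum_i q_i\pa^{-1}r_i,L]=[B_k+\sum_i q_i\pa^{-1}r_i,L]$, which is \eqref{eqn:tau-flow-extKP}.

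\medskip

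It then remains to verify the eigenfunction evolutions \eqref{eqn:q-extKP} and \eqref{eqn:r-extKP}, namely $\pa_{t_n}q_i=B_n(q_i)$ and $\pa_{t_n}r_i=-B_n^*(r_i)$. For $q_i=-\dot\af_i W(g_i)$ I would apply $\pa_{t_n}$, use $\pa_{t_n}(\dot\af_i)=0$ (since $\af_i$ depends only on $\ta_k$), $\pa_{t_n}g_i=\pa^n(g_i)$ from \eqref{eq:fg}, and $\pa_{t_n}W=-L^n_-W$ to get $\pa_{t_n}q_i=-\dot\af_i[(-L^n_-W)(g_i)+W(\pa^n g_i)]=-\dot\af_i(-L^n_-+W\pa^n W^{-1})W(g_i)=-\dot\af_i(L^n-L^n_-)W(g_i)=B_n(-\dot\af_i W(g_i))=B_n(q_i)$. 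For $r_i$ I would exploit the Wronskian characterization \eqref{eqn:cond-r} from the Remark, or equivalently Lemma \ref{lm:vanish} which gives $W^*(r_i)=0$; differentiating the defining relations $\sum_i h_i^{(j)}r_i=\dt_{j,N-1}$ in $t_n$ and using $\pa_{t_n}h_i^{(j)}=h_i^{(j+n)}$ produces a linear system that the adjoint evolution $-B_n^*(r_i)$ is seen to solve, again by the uniqueness argument that the relevant operator annihilates $N$ independent functions.

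\medskip

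The main obstacle I anticipate is not the $L$-flows, which are routine once \eqref{eqns:W-evolu} is granted, but rather the careful verification of the $r_i$-evolution \eqref{eqn:r-extKP}: the $r_i$ are defined implicitly through Wronskian quotients rather than by an explicit dressing formula like $q_i$, so connecting $\pa_{t_n}r_i$ to $-B_n^*(r_i)$ requires working through the adjoint kernel structure. I expect the cleanest route is to differentiate the linear constraint \eqref{eqn:cond-r} and invoke the same independence/degree argument used in the proof of Proposition \ref{prop:1}, rather than differentiating the determinant quotient directly.
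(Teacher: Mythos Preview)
Your treatment of the $L$-flows \eqref{eqn:tau-flow-extKP}, \eqref{eqn:t-flow-extKP} and of the $q_i$-evolution \eqref{eqn:q-extKP} is correct and matches the paper's argument exactly: differentiate $L=W\pa W^{-1}$ and plug in \eqref{eqns:W-evolu}, and for $q_i$ use $q_i=-\dot\af_iW(g_i)$ together with $\pa_{t_n}g_i=\pa^n g_i$ and \eqref{eqn:W-tn}.

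For the $r_i$-evolution \eqref{eqn:r-extKP}, however, the paper takes a different and cleaner route than the one you sketch. Instead of differentiating the linear constraints \eqref{eqn:cond-r}, the paper differentiates $W^{-1}$ itself, using the Oevel--Strampp identity of Lemma~\ref{lm:OS}, $W^{-1}=\sum_i h_i\pa^{-1}r_i$. On one hand $(W^{-1})_{t_n}=-W^{-1}W_{t_n}W^{-1}=\pa^nW^{-1}-W^{-1}B_n$; on the other hand, expanding $W^{-1}$ via Lemma~\ref{lm:OS} on both sides and using $(h_i\pa^{-1}r_iB_n)_-=h_i\pa^{-1}B_n^*(r_i)$ gives $\sum_ih_i\pa^{-1}r_{i,t_n}=-\sum_ih_i\pa^{-1}B_n^*(r_i)$, from which \eqref{eqn:r-extKP} follows immediately. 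Your proposed route---differentiate \eqref{eqn:cond-r} and check that $-B_n^*(r_i)$ solves the resulting system---can be made to work, but the step you gloss over is not routine: one must verify the identity $\sum_i h_i^{(j)}B_n^*(r_i)=\sum_i h_i^{(j+n)}r_i$ for $0\le j\le N-1$, and the natural way to do that is to compute $\Res_\pa(\pa^jW^{-1}B_n)$ two ways, which again rests on Lemma~\ref{lm:OS}. So your approach ultimately circles back to the same lemma, with extra bookkeeping; the paper's direct computation of $(W^{-1})_{t_n}$ buys a shorter and more transparent argument.
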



\begin{proof}
  The proof (\ref{eqn:t-flow-extKP}) and (\ref{eqn:q-extKP}) can be found in
  \cite{Dickey-bk03}. The proof of  (\ref{eqn:tau-flow-extKP}) is
  straightforward by using proposition \ref{prop:1} 
  \begin{align*}
    L_{\ta_k}=&W_{\ta_k}\pa W^{-1}-W\pa W^{-1} W_{\ta_k} W^{-1}\\
    =&(-L^k_-+\sum_i q_i\pa^{-1}r_i)L+ L(L^k_--\sum_i q_i\pa^{-1}r_i)\\
    =&[-L^k_-+\sum_i q_i\pa^{-1}r_i,L]=[B_k+\sum_{i=1}^N q_i\pa^{-1}r_i,L].
  \end{align*}
  So it remains to prove (\ref{eqn:r-extKP}). Firstly, we see that
  $$(W^{-1})_{t_n}=-W^{-1}W_{t_n}W^{-1}=-W^{-1}(L^n-B_n)=\pa^nW^{-1}-W^{-1}B_n.$$
  Then we make substitutions to this equality by applying $W^{-1}=\sum
  h_i\pa^{-1}r_i$ at both ends, we have
  \begin{align*}
    &(W^{-1})_{t_n}=\sum h_i^{(n)}\pa^{-1}r_i+\sum h_i\pa^{-1} r_{i,t_n}\\
    &=\pa^nW^{-1}-W^{-1}B_n=\sum h_i^{(n)}\pa^{-1}r_i-\sum h_i\pa^{-1}B_n^*(r_i)
  \end{align*}
  Then $\sum h_i\pa^{-1}r_{i,t_n}=-\sum h_i\pa^{-1}B_n^*(r_i)$ implies
  \eqref{eqn:r-extKP}.
\end{proof}

\section{The Extended mKP hierarchy}

In the same way as in section \ref{sec: exKP}, the extended mKP hierarchy is
formulated. The $L$ operator of mKP hierarchy is defined by
\begin{displaymath}
  L= \pa+v_0+v_1\pa^{-1}+v_2\pa^{-2}+\cdots
\end{displaymath}
The Lax equations of mKP hierarchy is given by
\begin{equation}
  L_{t_n}=[B_n,L]\quad n\ge1,\quad B_n=L^n_{\ge1}.
\end{equation}
The commutativity of $\pa_{t_n}$ and $\pa_{t_k}$ flows gives the zero curvature
equation
\begin{displaymath}
  B_{n,t_k}-B_{k,t_n}+[B_n,B_k]=0.
\end{displaymath}

When $n=2$ and $k=3$, we get \emph{mKP equation}:
\begin{displaymath}
  4v_t-v_{xxx}+6v^2v_x-3(D^{-1}v_{yy})-6v_x(D^{-1}v_y)=0,
\end{displaymath}
where $t:=t_3$, $y:=t_2$, $v:=v_0$. Dropping $y$ dependency, we obtain the mKdV
equation.


Since the squared eigenfunction symmetry constraint given by
\begin{align*}
  L^k&=B_k+\sum q_i\pa^{-1}r_i\pa\\
  q_{i,t_n}&=B_n(q_i)\quad i=1,\cdots,N\\
  r_{i,t_n}&=-(\pa B_n \pa^{-1})^*(r_i)
\end{align*}
is consistent with the mKP hierarchy and reduces mKP hierarchy to $k$-constraint
mKP hierarchy (see \cite{OC98,OS93} and references therein), we can define
extended mKP hierarchy by introduce a new time evolution of $L$.
\begin{align*}
  L_{\ta_k}&=[B_k+\sum q_i\pa^{-1}r_i\pa,L]\\
  q_{i,t_n}&=B_n(q_i)\quad i=1,\cdots,N\\
  r_{i,t_n}&=-(\pa B_n \pa^{-1})^*(r_i)
\end{align*}

\begin{defn}
  The \emph{extended mKP hierarchy} (exmKPH) is defined by
  \begin{subequations}
    \label{eqns:exmKP-Lax}
    \begin{align}
      L_{\ta_k}&=[B_k+\sum_{i=1}^Nq_i\pa^{-1}r_i\pa, L],  \label{eqn:tau-flow-exmKP}\\
      L_{t_n}&=[B_n,L], \quad n\neq k,\label{eqn:t-flow-exmKP}\\
      q_{i,t_n}&=B_n(q_i),\quad i=1,\cdots,N \label{eqn:q-exmKP}\\
      r_{i,t_n}&=-(\pa B_n\pa^{-1})^*(r_i). 
    \end{align}
  \end{subequations}
\end{defn}
We can verify the compatibility of (\ref{eqn:tau-flow-exmKP}) and
(\ref{eqn:t-flow-exmKP}) in the same way as in \cite{LZL08}.

By using the identity
$(q_i\pa^{-1}r_i\pa)_{t_n}=[B_n,q_i\pa^{-1}r_i\pa]_{\le0}$, the zero curvature
equation for exmKPH can be written as:
\begin{subequations}
  \label{eqns:zc-exmKP}
  \begin{align}
    &B_{k,t_n}-B_{n,\ta_k}+[B_k,B_n]+\sum_{i=1}^N[q_i\pa^{-1}r_i\pa,B_n]_{\ge1}=0
    \label{eqn:zc-exmKP-main}\\
    &q_{i,t_n}=B_n(q_i),\quad i=1,\cdots, N\label{eqn:zc-exmKP-q}\\
    &r_{i,t_n}=-(\pa^{-1}B_n^*\pa)(r_i).\label{eqn:zc-exmKP-r}
  \end{align}
\end{subequations}
Under the condition (\ref{eqn:zc-exmKP-q}) and (\ref{eqn:zc-exmKP-r}), the Lax
pair for (\ref{eqns:zc-exmKP}) is given by
\begin{align*}
  \Psi_{t_n}=B_n(\Psi),\quad \Psi_{\ta_k}=(B_k+\sum_{i=1}^Nq_i\pa^{-1}r_i\pa)(\Psi).
\end{align*}

\begin{eg}
  When $n=2$ and $k=3$, we get the first type of mKP with self-consistent
  sources
  \begin{align*}
    &4v_t-v_{xxx}+6v^2v_x-3\invp v_{yy}-6v_x\invp
    v_y+4\sum_{i=1}^N(q_ir_i)_x=0,\\
    &q_{i,y}=q_{i,xx}+2vq_{i,x},\\
    &r_{i,y}=-r_{i,xx}+2vr_{i,x},
  \end{align*}
  where $t:=\ta_3$, $y:=t_2$, $v:=v_0$. When $n=3$ and $k=2$, we get the second
  type of mKPSCS
  \begin{subequations}
    \label{eqns:2nd-exmKP}
    \begin{align}
      &4v_t-v_{xxx}+6v^2v_x-3\invp v_{yy}-6v_x\invp
      v_y\notag\\
      &\quad\quad\quad\quad\quad\quad\quad\quad
      +\sum_{i=1}^N[3(q_ir_{i,xx}-q_{i,xx}r_i)-3(q_ir_i)_y-6(vq_ir_i)_x]=0,\\
      &q_{i,t}=q_{i,xxx}+3vq_{i,xx}+\frac32(\invp
      v_y)q_{i,x}+\frac32v_xq_{i,x}+\frac32v^2q_{i,x}+\frac32\sum_{j=1}^N(q_jr_j)q_{i,x},\\
      &r_{i,t}=r_{i,xxx}-3vr_{i,xx}+\frac32(\invp
      v_y)r_{i,x}-\frac32v_xr_{i,x}+\frac32v^2r_{i,x}+\frac32\sum_{j=1}^N(q_jr_j)r_{i,x},
    \end{align}
  \end{subequations}
  where $y:=\ta_2$, $t:=t_3$, $v:=v_0$.
\end{eg}

\begin{prop}
  It is easy to see that any constant is an (adj-)eigenfunction for
  (\ref{eqn:zc-exmKP-q}) and (\ref{eqn:zc-exmKP-r}). Let $N=2$, $q_1:=q$,
  $r_1=1$, $q_2=-1$ and $r_2=r$, then we have a special case of exmKPH, called
  the \emph{non-standard exmKPH}:
  \begin{subequations}
    \label{eqns:non-std-exmKP}
    \begin{align}
      &B_{k,t_n}-B_{n,\ta_k}+[B_k,B_n]+[q-r+\pa^{-1}r_x,B_n]_{\ge1}=0\\
      &q_{t_n}=B_n(q),\\
      &r_{t_n}=-(\pa^{-1}B_n^*\pa)(r).
    \end{align}
  \end{subequations}
\end{prop}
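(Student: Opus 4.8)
The plan is to verify the proposition in two steps: first confirm the parenthetical claim that constants solve the (adjoint) eigenfunction equations \eqref{eqn:zc-exmKP-q} and \eqref{eqn:zc-exmKP-r}, and then substitute the specific data $N=2$, $q_1=q$, $r_1=1$, $q_2=-1$, $r_2=r$ directly into the zero curvature system \eqref{eqns:zc-exmKP}, the only real work being to simplify the source operator $\sum_i q_i\pa^{-1}r_i\pa$.

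For the first step I would use the fact that for the mKP hierarchy $B_n=L^n_{\ge1}$ contains only the powers $\pa^1,\ldots,\pa^n$ and carries no zeroth order (multiplication) term. Hence for any constant $c$ we have $B_n(c)=0$ while $c_{t_n}=0$, so \eqref{eqn:zc-exmKP-q} holds with $q_i=c$; for the adjoint equation $\pa(c)=0$ forces $(\pa^{-1}B_n^*\pa)(c)=0$, and again $c_{t_n}=0$, so \eqref{eqn:zc-exmKP-r} holds with $r_i=c$. This is exactly what makes the choices $q_2=-1$ and $r_1=1$ admissible, so that the reduction is internally consistent and no extra compatibility constraints are generated.

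The heart of the argument is the second step. Substituting the chosen data, the source term splits as $q_1\pa^{-1}r_1\pa+q_2\pa^{-1}r_2\pa=q\,\pa^{-1}\pa-\pa^{-1}r\pa$. The first piece collapses to multiplication by $q$ since $\pa^{-1}\pa=1$. For the second piece I would invoke the operator identity $r\pa=\pa r-r_x$, which yields $\pa^{-1}r\pa=r-\pa^{-1}r_x$. Combining, $\sum_{i=1}^2 q_i\pa^{-1}r_i\pa=q-r+\pa^{-1}r_x$, and inserting this into \eqref{eqn:zc-exmKP-main} reproduces the first equation of \eqref{eqns:non-std-exmKP}. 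The remaining evolution equations then follow immediately: \eqref{eqn:zc-exmKP-q} for $i=1$ is exactly $q_{t_n}=B_n(q)$, and \eqref{eqn:zc-exmKP-r} for $i=2$ is exactly $r_{t_n}=-(\pa^{-1}B_n^*\pa)(r)$, while the equations for $q_2$ and $r_1$ are the trivial identities secured in the first step.

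Since every ingredient reduces to elementary pseudodifferential operator algebra, I do not anticipate a genuine obstacle; the only place demanding care is the bookkeeping for $\pa^{-1}r\pa$, where one must retain the non-local remainder $\pa^{-1}r_x$ rather than discarding it, as it is precisely this term that survives in the reduced source $q-r+\pa^{-1}r_x$. A secondary point worth stating explicitly is that the projection $[\,\cdot\,]_{\ge1}$ is undisturbed by the manipulation, because we only rewrite the operator $\sum_i q_i\pa^{-1}r_i\pa$ itself \emph{before} forming its commutator with $B_n$.
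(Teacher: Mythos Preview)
Your proposal is correct and follows exactly the direct verification the paper has in mind; indeed, the paper does not supply a separate proof at all, relying on the phrase ``It is easy to see'' together with straightforward substitution into \eqref{eqns:zc-exmKP}. Your computation of $\sum_i q_i\pa^{-1}r_i\pa = q - r + \pa^{-1}r_x$ via $\pa^{-1}r\pa = r - \pa^{-1}r_x$ and your check that constants trivially satisfy \eqref{eqn:zc-exmKP-q} and \eqref{eqn:zc-exmKP-r} are precisely the intended elementary steps.
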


\begin{eg}
  For $n=2$ and $k=3$, let $\ta_3=t$, $t_2=y$, $v_0=v$, we get the first
  non-standard extended mKP equation
  \begin{align*}
    &4v_t-v_{xxx}+6v^2v_x-3\invp v_{yy}-6v_x\invp
    v_y+4(q-r)_x=0,\\
    &q_{y}=q_{xx}+2vq_{x},\\
    &r_{y}=-r_{xx}+2vr_{x}.
  \end{align*}
\end{eg}

\begin{eg}
  For $n=3$ and $k=2$, let $\ta_2=y$, $t_3=t$, $v_0=v$, we get the second
  non-standard extended mKP equation
  \begin{align*}
    &4v_t-v_{xxx}+6v^2v_x-3\invp v_{yy}-6v_x\invp
    v_y\notag\\
    &\qquad\qquad\qquad-3(q-r)_y-3(q-r)_{xx}-6(vq-vr)_x-6r=0,\\
   &q_t=q_{xxx}+3vq_{xx}+\frac32(\invp v_y)q_x+\frac32 v_xq_x+\frac32 v^2q_x+\frac32(q-r)q_x,\\
    &r_t=r_{xxx}-3vr_{xx}+\frac32(\invp v_y)r_x-\frac32v_xr_{x}+\frac32v^2r_x+\frac32(q-r)r_x.
  \end{align*}
\end{eg}

For the extended mKP hierarchies, there are two time series $t_n$ and $\ta_k$.
So we can consider the reduction with respect to these time series, namely, the
$t_n$-reduction and $\ta_k$-reduction for the exmKP hierarchies, see appendix
\ref{sec:red-exmKP}.

\section{Gauge Transformation between exKPH and exmKPH}

In \cite{OS93,XZ05}, authors give a gauge transformation between KP and mKP
hierarchy with self-consistent sources. This transformation can be put
parallelly to the exKPH and exmKPH case. In this section, we use $L$ and $q_i$,
$r_i$ for exKP hierarchy, and $\tilde{L}$ and $\tilde{q}_i$, $\tilde{r}_i$ for
exmKP hierarchy. The main result can be established as following
\begin{thm}
  Suppose $L$, $q_i$, $r_i$ satisfy (\ref{eqns:Lax-ext}), $f$ is a particular
  eigenfunction for Lax pair (\ref{eqn:zc-exKP-LaxPair}), then
  \begin{displaymath}
    \tilde{L}:=f^{-1} L f,\quad \tilde{q}_i:=f^{-1}q_i,\quad \tilde{r}_i:=-D^{-1}(fr_i)
  \end{displaymath}
  satisfies the exmKP hierarchy (\ref{eqns:exmKP-Lax}).
\end{thm}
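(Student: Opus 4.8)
The plan is to verify the gauge-transformed quantities $\tilde L=f^{-1}Lf$, $\tilde q_i=f^{-1}q_i$, $\tilde r_i=-D^{-1}(fr_i)$ satisfy each equation of \eqref{eqns:exmKP-Lax} in turn, exploiting the fact that conjugation by the scalar function $f$ is an algebra homomorphism on pseudo-differential operators. The central computational device will be the conjugation identity for projections: for a KP Lax operator the mKP building block $\tilde B_n=\tilde L^n_{\ge1}$ is related to $B_n=L^n_{\ge0}$ by $\tilde B_n=f^{-1}B_nf-f^{-1}B_n(f)$, i.e. $\tilde B_n=f^{-1}(B_n-B_n(f)f^{-1})f$, where $B_n(f)$ denotes the action of $B_n$ on the eigenfunction $f$. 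This is the standard Oevel--Strampp gauge relation and it holds precisely because $f$ is an eigenfunction of the Lax pair \eqref{eqn:zc-exKP-LaxPair}, so that $f_{t_n}=B_n(f)$ and $f_{\ta_k}=(B_k+\sum_i q_i\pa^{-1}r_i)(f)$. I would establish this relation first, since every subsequent step rests on it.

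First I would dispose of the $t_n$-flow \eqref{eqn:t-flow-exmKP}. Differentiating $\tilde L=f^{-1}Lf$ and using $L_{t_n}=[B_n,L]$ together with $f_{t_n}=B_n(f)$, a direct computation gives $\tilde L_{t_n}=[\tilde B_n,\tilde L]$ once the conjugation relation for $\tilde B_n$ is substituted; the scalar correction term $-f^{-1}B_n(f)f^{-1}\cdot f\tilde L+\cdots$ cancels because $\tilde L$ commutes with the scalar $f^{-1}B_n(f)$ only up to a commutator that is absorbed by the $B_n(f)f^{-1}$ piece of $\tilde B_n$. Next, the eigenfunction equations \eqref{eqn:q-exmKP}: from $\tilde q_i=f^{-1}q_i$ and $q_{i,t_n}=B_n(q_i)$, $f_{t_n}=B_n(f)$, I would check $\tilde q_{i,t_n}=\tilde B_n(\tilde q_i)$ by applying the conjugated operator $\tilde B_n=f^{-1}(B_n-B_n(f)f^{-1})f$ to $\tilde q_i=f^{-1}q_i$ and matching. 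The adjoint equation for $\tilde r_i$ is the dual: using $\tilde r_i=-D^{-1}(fr_i)$ one computes $\tilde r_{i,t_n}$ from $r_{i,t_n}=-B_n^*(r_i)$, and the target $-(\pa B_n\pa^{-1})^*(\tilde r_i)=-\pa^{-1}B_n^*\pa(\tilde r_i)$ should emerge after integrating by parts, the $D^{-1}$ and the $\pa$ in $\pa B_n\pa^{-1}$ being exactly what the $-D^{-1}(f\,\cdot)$ normalization is designed to produce.

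The main obstacle will be the $\ta_k$-flow \eqref{eqn:tau-flow-exmKP}, because here the source term $\sum_i q_i\pa^{-1}r_i$ must be carried through the conjugation and shown to turn into $\sum_i \tilde q_i\pa^{-1}\tilde r_i\pa$. The key is to conjugate the full $\ta_k$-generator: I would write $f^{-1}(B_k+\sum_i q_i\pa^{-1}r_i)f$ and show that under the substitutions $\tilde q_i=f^{-1}q_i$, $\tilde r_i=-D^{-1}(fr_i)$ the source block satisfies $f^{-1}q_i\pa^{-1}r_i f = \tilde q_i\pa^{-1}r_i f$, and then the operator identity $\pa^{-1}r_i f=\pa^{-1}\pa(D^{-1}(r_i f))\cdots$ — more precisely one uses $\pa^{-1}(r_if)=\pa^{-1}(\tilde r_i)_x\cdot(-1)$ type manipulation to trade the bare $\pa^{-1}r_if$ for $(-D^{-1}(fr_i))\pa=\tilde r_i\pa$ modulo a purely differential remainder that is absorbed into the redefinition $B_k\mapsto\tilde B_k$. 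Crucially, the eigenfunction condition $f_{\ta_k}=(B_k+\sum_iq_i\pa^{-1}r_i)(f)$ is what makes $\tilde B_k=f^{-1}(B_k+\sum_iq_i\pa^{-1}r_i-\text{[action on }f]f^{-1})f$ land on the correct projection $\tilde L^k_{\ge1}$; verifying that the scalar subtraction produced by the source term aligns with the $\ge1$ truncation in the mKP setting is the delicate point. Once the generator of the $\ta_k$-flow is identified with $\tilde B_k+\sum_i\tilde q_i\pa^{-1}\tilde r_i\pa$, the bracket form $\tilde L_{\ta_k}=[\tilde B_k+\sum_i\tilde q_i\pa^{-1}\tilde r_i\pa,\tilde L]$ follows by the same conjugation-is-homomorphism argument as in the $t_n$ case, completing the proof.
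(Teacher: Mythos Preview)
Your proposal is correct and is precisely the ``straightforward calculation'' the paper has in mind: the paper in fact omits the proof entirely, stating only that it follows by direct computation. Your outline---establishing the conjugation relation $\tilde B_n=f^{-1}B_nf-f^{-1}B_n(f)$, then verifying \eqref{eqn:t-flow-exmKP}, \eqref{eqn:q-exmKP}, the adjoint equation, and finally the $\ta_k$-flow by tracking how $f^{-1}q_i\pa^{-1}r_if$ becomes $\tilde q_i\pa^{-1}\tilde r_i\pa$ minus a scalar absorbed into $\tilde B_k$---is exactly the standard Oevel--Strampp gauge computation the authors are invoking, so there is nothing to contrast.
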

We omit the proof since it is done by straightforward calculation. 

Then we
choose $$f=W(1)=(-1)^N\frac{\Wr(h_1',\cdots,h_N')}{\Wr(h_1,\cdots,h_N)}$$ as the
particular eigenfunction for Lax pair (\ref{eqn:zc-exKP-LaxPair}), where $W$ is
the dressing operator defined by (\ref{eqn:W}), and $1$ is clearly the
\emph{seed solution} for (\ref{eqn:zc-exKP-LaxPair}) when $L=\pa$,
$q_i=r_i=0$. Then starting from the $L$ $q_i$ and $r_i$ given by theorem
\ref{prop:1}, the Wronskian solution for exmKP hierarchy is
\begin{subequations}
  \label{eqn:wronskian-exmKP}
  \begin{align}
    \tilde{L} &= \frac{\Wr(h_1,\cdots,h_N,\pa)}{\Wr(h_1',\cdots,h_N')}\pa
    \left[\frac{\Wr(h_1,\cdots,h_N,\pa)}{\Wr(h_1',\cdots,h_N')}\right]^{-1}\\
    \tilde{q}_i &=
    -\dot{\af}_i\frac{\Wr(h_1,\cdots,h_N,g_i)}{\Wr(h_1',\cdots,h_N')}\\
    \tilde{r}_i
    &=\frac{\Wr(h_1',\cdots,\widehat{h_i'},\cdots,h_N')}{\Wr(h_1,\cdots,h_N)},
    \label{eqn:wronskian-exmKP-r}
  \end{align}
\end{subequations}
The equation (\ref{eqn:wronskian-exmKP-r}) can be proved by the trick of Laplace
expansion formula.

\section{Solutions for the exKP and exmKP hierarchies}
Dressing approach with variation of constants and gauge transformation give us a
simple way to construct explicit solutions to exKP and exmKP hierarchies. We
will use the second type of KPSCS and the second type of mKPSCS as the
example. The solution of the second type of KPSCS is constructed by source
generating method \cite{Wang07}, the solutions of the second type of mKPSCS is
not known yet.

\begin{eg}[Solutions for the second type of KPSCS (\ref{eqns:Another-KPSCS})]
  Let $k=2$, $n=3$ in (\ref{eqns:Lax-ext}) or (\ref{eqns:zc-exKP}), $y:=\ta_2$,
  $t:=t_3$, then (\ref{eqns:Lax-ext}) leads to second type of KPSCS
  (\ref{eqns:Another-KPSCS}). Then (\ref{eq:fg}) has following solutions
  \begin{displaymath}
    f_i=\exp(\ld_ix+\ld_i^2y+\ld_i^3t):=e^{\xi_i},\quad 
    g_i=\exp(\mu_ix+\mu_i^2y+\mu_i^3t):=e^{\eta_i}
  \end{displaymath}
  where $\ld_i$ and $\mu_i$ (for $i=1,\ldots,N$) are different parameters. We
  use a linear combinations of $f_i$ and $g_i$ with coefficients $\af_i$
  explicitly depending on $y$
  \begin{displaymath}
    h_i=f_i+\af_i(y)g_i=2\sqrt{\af_i}e^{\frac{\xi_i+\eta_i}{2}}\cosh(\Om_i)
  \end{displaymath}
  where $\Om_i=\frac{\xi_i-\eta_i}{2}-\frac12\ln(\af_i)$.
  
  \begin{enumerate}
  \item One soliton solution\\
    Let $N=1$, then we have
    \begin{displaymath}
      L=W\pa W^{-1}=\pa +
      \frac{(\ld_1-\mu_1)^2}{4}\sech^2(\Om_1)\pa^{-1}+\cdots,\quad W=\pa-h_1'/h_1
    \end{displaymath}
    The one soliton solution of (\ref{eqns:Another-KPSCS}) with $N=1$
    is 
    \begin{align*}
      u&=\frac{(\ld_1-\mu_1)^2}{4}\sech^2(\Om),\\
      q_1&=\sqrt{\af_1}_y(\ld_1-\mu_1)e^{\frac{\xi_1+\eta_1}{2}}\sech(\Om_1),\\
      r_1&=\frac{1}{2\sqrt{\af_1}}e^{-\frac{\xi_1+\eta_1}{2}}\sech(\Om_1).
    \end{align*}

  \item Two soliton solution\\
    Let $N=2$, then we have 
    \begin{displaymath}
      L=\pa+ \pa_x^2\ln\ta\pa^{-1}+\cdots
    \end{displaymath}
    Two soliton solution is $$u=\pa^2\ln\ta$$
    \begin{displaymath}
      q_1=\af_{1,y}\frac{(\ld_1-\mu_1)(\ld_2-\mu_1)}{\ta} \left(1+\af_2
      \frac{(\ld_1-\mu_2) (\mu_2-\mu_1)}{(\ld_1-\ld_2)(\ld_2-\mu_1)}e^{\chi_2}\right)e^{\eta_1}
    \end{displaymath}
    \begin{displaymath}
      q_2=\af_{2,y}\frac{(\ld_2-\mu_2)(\ld_1-\mu_2)}{\ta} \left(1+\af_1
      \frac{(\ld_2-\mu_1) (\mu_1-\mu_2)}{(\ld_2-\ld_1)(\ld_1-\mu_2)}e^{\chi_1}\right)e^{\eta_2}
    \end{displaymath}
    \begin{displaymath}
      r_1=\frac{1+\af_2e^{\chi_2}}{(\ld_1-\ld_2)\ta}e^{-\xi_1},\quad
      r_2=\frac{1+\af_1e^{\chi_1}}{(\ld_2-\ld_1)\ta}e^{-\xi_2}
    \end{displaymath}
    where $\chi_i=\eta_i-\xi_i$ ($i=1,2$) and
    \begin{displaymath}
      \ta=1+\af_1\frac{\ld_2-\mu_1}{\ld_2-\ld_1}e^{\chi_1}
      +\af_2\frac{\mu_2-\ld_1}{\ld_2-\ld_1}e^{\chi_2}
      +\af_1\af_2\frac{\mu_2-\mu_1}{\ld_2-\ld_1}e^{\chi_1+\chi_2}.
    \end{displaymath}
  \end{enumerate}
\end{eg}
\begin{eg}[Solutions for the second type of mKPSCS (\ref{eqns:2nd-exmKP})]
  Since $\tilde{L}=f^{-1} L f$. Then it is easy to see
  \begin{displaymath}
    v_0=\pa_x\ln f,
  \end{displaymath}
  where $f=W(1)=(-1)^N\frac{\Wr(h_1',\cdots,h_N')}{\Wr(h_1,\cdots,h_N)}$.
  \begin{enumerate}
  \item For $N=1$ we have 
    
    \begin{displaymath}
      \tilde{L}=\pa+\frac{\ld_1-\mu_1}2\left[\tanh(\Om_1+\tht_1)-\tanh\Om_1\right] +\cdots
    \end{displaymath}
    where $\tht_1=\ln\sqrt{\frac{\ld_1}{\mu_1}}$.
    One soliton solution for (\ref{eqns:2nd-exmKP}) with $N=1$ is
    \begin{align*}
      v&=\frac{\ld_1-\mu_1}{2}[\tanh(\Om_1+\tht_1)-\tanh(\Om_1)],\\
      \tilde{q}_1&=\pa_y(\sqrt{\af_1/(\ld_1\mu_1)})(\mu_1-\ld_1)e^{\frac{\xi_1+\eta_1}{2}}\sech(\Om_1+\tht_1),\\
      \tilde{r}_1&=-\frac1{2\sqrt{\af_1}}e^{-\frac{\xi_1+\eta_1}{2}}\sech\Om_1.
    \end{align*}

  \item For $N=2$, we have

    \begin{displaymath}
      \tilde{L}=\pa+\pa_x\ln f +\cdots
    \end{displaymath}
    where $f=\ld_1\ld_2\tilde{\ta}/\ta$, $\ta$ is defined in the previous example,
    \begin{displaymath}
      \tilde{\ta}=1+\frac{\mu_1}{\ld_1}\af_1\frac{\ld_2-\mu_1}{\ld_2-\ld_1}e^{\chi_1}
      +\frac{\mu_2}{\ld_2}\af_2\frac{\mu_2-\ld_1}{\ld_2-\ld_1}e^{\chi_2}
      +\frac{\mu_1\mu_2}{\ld_1\ld_2}\af_1\af_2\frac{\mu_2-\mu_1}{\ld_2-\ld_1}e^{\chi_1+\chi_2}.
    \end{displaymath}
      
    The two soliton solution for (\ref{eqns:2nd-exmKP}) with $N=2$ is
    \begin{align*}
      v&=\pa \ln f,\\
      \tilde{q}_1&=\af_{1,y}\frac{(\ld_1-\mu_1)(\ld_2-\mu_1)}{\ld_1\ld_2\tilde{\ta}} \left(1+\af_2
        \frac{(\ld_1-\mu_2) (\mu_2-\mu_1)}{(\ld_1-\ld_2)(\ld_2-\mu_1)}e^{\chi_2}\right)e^{\eta_1},\\
      \tilde{q}_2&=\af_{2,y}\frac{(\ld_2-\mu_2)(\ld_1-\mu_2)}{\ld_1\ld_2\tilde{\ta}} \left(1+\af_1
        \frac{(\ld_2-\mu_1) (\mu_1-\mu_2)}{(\ld_2-\ld_1)(\ld_1-\mu_2)}e^{\chi_1}\right)e^{\eta_2},\\
      \tilde{r}_1&=\frac{\ld_2+\af_2\mu_2e^{\chi_2}}{(\ld_2-\ld_1)\ta}e^{-\xi_1},\quad
      \tilde{r}_2=\frac{\ld_1+\af_1\mu_1e^{\chi_1}}{(\ld_2-\ld_1)\ta}e^{-\xi_2}.
    \end{align*}
  \end{enumerate}

\end{eg}

\section{Conclusion}
In this paper, we generalize the dressing approach for KP hierarchy to exKPH by
combining dressing approach and variation of constants, we also present formulas
for $q_i$ and $r_i$.  The traditional dressing method can not provide the
evolution of extended $\ta_k$ variable. By introducing some varying constants,
say $\af_i(\ta_k)$, we can obtain the evolution of $\ta_k$ correctly. In this
way, we constructed Wronskian solutions to the whole exKP hierarchy.


In the same way as extended KP hierarchy, we considered the extended mKP
hierarchy by introducing two series of time variables, say $t_n$ and
$\ta_k$. The first and second types of mKPSCS are obtained as the first two
non-trivial equations of exmKPH. We also considered a special case of exmKPH,
called the non-standard exmKP hierarchy, by choosing special $q_i$ and $r_i$ in
exmKPH. Two types of reductions for exmKPH and non-standard exmKPH, by dropping
down $t_n$ or $\ta_k$ dependencies respectively are discussed. Some 1+1
dimensional systems, such as mKdV with self-consistent sources and
$k$-constrained mKP hierarchy, are obtained as reductions of exmKPH.

A gauge transformation between exKPH and exmKPH is given. It helps us to obtain
explicit solutions to the exmKP hierarchy. By using this transformation, we find
soliton solutions for the exmKPH, especially for the second type of mKPSCS,
which is unknown before.

It is known that besides KP and mKP, there is another 2+1 dimensional hierarchy
associated with the PDO $L:= w\pa + w_0 + w_1\pa^{-1} + \cdots$, which has the Lax
equation
\begin{displaymath}
  L_{t_n}=[L^n_{\ge2},L].
\end{displaymath}
It is called \emph{Dym} hierarchy. Note that the symmetry constraint for this
hierarchy is known and a reciprocal transformation between mKP hierarchy and Dym
hierarchy is discussed \cite{OC98}, it is not difficult to find the extended Dym
hierarchy and give its solutions.

Another interesting question is: Can we apply the dressing method with variation
of constants to other kinds of 2+1 dimensional systems? For example, the
extended BKP and extended CKP hierarchy, the discrete extended KP hierarchy, the
q-deformed extended KP hierarchy, and the KP like hierarchy defined by using
\emph{star product} instead of PDO. The last one is worth considering, because
it admits a deformation to an extended version of dispersionless KP hierarchy.
So, if we can apply dressing approach to extended version of such hierarchy, is
it possible to deform the solutions of extended version of dispersionless KP
hierarchy? If possible, it is another way to give explicit solutions for
dispersionless hierarchy, compared with implicit solution given by hydograph
method. We will consider such problems in the future.

\section*{Acknowledgement}
The author LXJ thanks Dr. Chaozhong Wu for his idea for the proof of Lemma
\ref{lm:key}. This work was supported by National Basic Research Program of
China (973 Program) (2007CB814800), National Natural Science Foundation of China
(grand No. 10601028 and grand No. 10801083).

\appendix\section{Reductions of exmKP hierarchies}
\label{sec:red-exmKP}
\subsection{The $t_n$-reduction}
The $t_n$-reduction is given by
\begin{equation}
  \label{eq:n-rd-mKP}
  L^n_{\le 0}=0, \quad \text{or} \quad L^n=B_n.
\end{equation}
Then we have
\begin{displaymath}
  (L^n)_{t_n}=[B_n, L^n]=0,\quad B_{n,t_n}=0
\end{displaymath}
$L$ is independent of $t_n$ and $B_n(q_i)=L^n(q_i)=\ld_i^n(q_i)$,
$-(\pa^{-1}B_n^*\pa)(r_i)=\ld_{i}^n(r_i)$. In such case, the submanifold
$L^n_{\le0}=0$ is invariant under $\ta_k$-flow, namely,
\begin{align*}
  &(L^n_{\le0})_{\ta_k}=[B_k+\sum_i q_i\pa^{-1}r_i\pa, L^n]_{\le0}\\
  =&[B_k, B_n]_{\le0}+[\sum_i q_i\pa^{-1}r\pa, B_n]_{\le0}\\
  =&(\sum_iq_i\pa^{-1}r_i\pa B_n)_{\le0}-\sum_i (B_n q_i\pa^{-1}r_i\pa)_{\le0}
\end{align*}
Note that for a pure differential operator $B$ without $\pa^0$ term
$$(\pa^{-1}rB)_{\le0}=\pa^{-1}(\pa^{-1}B^*(r))\pa.$$ So the last express
equals to zero:
\begin{align*}
  &-\sum_iq_i\pa^{-1}(\pa^{-1}B_n^*\pa(r_i))\pa - \sum_i B_n(q_i)\pa^{-1}r_i\pa\\
  &=\sum_i \ld_i^nq_i\pa^{-1} r_i\pa-\sum_i \ld_i^nq_i\pa^{-1}r_i\pa=0.
\end{align*}
Therefore the exmKP hierarchy (\ref{eqns:exmKP-Lax}) can be reduced to this
invariant submanifold, the $t_n$ dependency is dropped. We call the following
reduction the \emph{$t_n$-reduction} for the exmKP hierarchy,
\begin{subequations}
  \label{eqns:nRd-stdmKP}
  \begin{align}
    \mc{L}_{\ta_k}&=[\mc{L}^{k/n}_{\ge1}+\sum_{i=1}^Nq_i\pa^{-1}r_i\pa,\mc{L}]
    \label{eqn:nRd-stdmKP}\\
    \ld_i^nq_i&=\mc{L}(q_i)\label{eqn:nRd-stdmKP-egnfn}\\
    \ld_i^nr_i&=-\pa^{-1}\mc{L}^*\pa(r_i),\quad
    i=1,\ldots,N\label{eqn:nRd-stdmKP-adjEgnfn}
  \end{align}
  where $\mc{L}=L^n=\pa^n+V_{n-2}\pa^{n-1}+\cdots+V_{0}\pa$.
\end{subequations}
\begin{rmk}
  (\ref{eqns:nRd-stdmKP}) can be regarded as (1+1)-dimensional integrable
  soliton hierarchy with self-consistent sources.
\end{rmk}
\begin{eg} For $n=2$, $k=3$, $\mc{L}=\pa^2+V_0\pa$, (\ref{eqns:nRd-stdmKP})
  gives \emph{the first type of mKdV equation with self-consistent souces.}
  \begin{align*}
    &V_t=\frac14 V_{xxx}-\frac34 V^2V_x-2\sum_i(q_ir_i)_x\\
    &q_{i,xx}+Vq_{i,x}=\ld_i^2q_i\\
    &-r_{i,xx}+Vr_{i,x}=\ld_i^2r_i
  \end{align*}
  where $V:=V_0$, $t:=\ta_3$. 

  For $n=3$, $k=2$, $\mc{L}=\pa^3+V_1\pa^2+V_0\pa$, (\ref{eqns:nRd-stdmKP})
  gives
  \begin{align*}
    &V_{tt}=-\frac13 V_{xxxx}-\frac23 V_{xx}D^{-1}V_{t}-\frac23
    V_xV_t+\frac23(V^3)_{xxx}
    \\ &\quad\quad\quad
    +\sum_i \left(3(q_ir_{i,x}-q_{i,x}r_i)_{xx}-(Vq_ir_i)_{xx}-3(q_ir_i)_{xt}\right)\\
    &q_i'''+Vq_i''+Uq_i'=\ld_i^3q_i\\
    &r_i'''-Vr_i''+(V-U')r_i'=\ld_i^3r_i
  \end{align*}
  where $t=\ta_2$, $V=V_1$, $U=\frac12 D^{-1}V_t+\frac12
  V_x+\frac16V^2+\sum_i\frac32 (q_ir_i)$.
\end{eg}

The $t_n$-reduction for the non-standard exmKP hierarchy is formulated
below. Since the non-standard case is obtained by setting $N=2$, $r_1=-q_2=1$,
then $\ld_1$ and $\ld_2$ must be zero. Let $\eta:=q-r$, $\psi:=r_x$, then the
$t_n$-reduction of non-standard exmKP hierarchy is
\begin{subequations}
  \label{eqns:n-reduction-non-std-exmKP}
  \begin{align}
    &\mc{L}_{\ta_k}=[\mc{L}^{k/n}_{\ge1}+(q-r)+\pa^{-1}r_x,\mc{L}],\\
    &\mc{L}(q)=0,\\
    &(\pa^{-1}\mc{L}^*)(r_x)=0.
  \end{align}
\end{subequations}
\begin{eg}
  For $n=2$ and $k=3$, $\mc{L}:=\pa^2+V\pa$,
  \begin{align*}
      &V_t=\frac14 V_{xxx}-\frac38 V^2V_x-2(q-r)_x,\\
      &q_{xx}+Vq_{x}=0,\\
      &r_{xx}-Vr_{x}=0.
    \end{align*}
  

  For $n=3$ and $k=2$, $\mc{L}:=\pa^3+V\pa^2+U\pa$,
  \begin{align*}
      &V_{tt}=-\frac13 V_{xxxx}+\frac2{27}(V^3)_{xx}-\frac43
      V_{xx}D^{-1}V_t-\frac43V_xV_t-3(q-r)_{xt}\\
      &\quad\quad-2(V(q-r))_{xx}-6r_{xxx}\\
      &q_{xxx}+Vq_{xx}+Uq_x=0,\\
      &r_{xxx}-(Vr_x)_x+Ur_x=0.
    \end{align*}
  where $U=\frac12(V_x+\frac13 V^2+3\eta+D^{-1}V_t)$.
\end{eg}

\subsection{The $\ta_k$-reduction}
The $\tau_k$-reduction is given by following symmetry constraint
\begin{align*}
  &L^k_{\le0}=\sum q_i\pa^{-1}r_i\pa\\
  &q_{i,t_n}=B_n(q_i)\\
  &r_{i,t_n}=-(\pa^{-1}B_n^*\pa)(r_i)
\end{align*}

It is shown by \cite{XZ05} the constraint is invariant under $t_n$-flow,
therefore we reduce the standard exmKP hierarchy (\ref{eqns:exmKP-Lax}) to the
following well-known \emph{$k$-constrained mKP hierarchy} \cite{OC98,OS93}.
\begin{subequations}
  \begin{align}
    (L_k)_{t_n}&=[(L_k)^{n/k}_{\ge1},L_k],\\
    q_{i,t_n}&=B_n(q_i),\\
    r_{i,t_n}&=-(\pa^{-1}B_n^*\pa)(r_i)
  \end{align}
  where $L_k=L^k_{\ge1}+\sum_iq_i\pa^{-1}r_i\pa$, $B_n=(L_k)^{n/k}_{\ge1}$.
\end{subequations}


In \cite{OS93}, the author discussed the mKP constrained by
$\eta+\pa^{-1}\psi$. Clearly this constraint is just the $\ta_k$-reduction of
non-standard exmKP (\ref{eqns:non-std-exmKP}), by letting $\eta=q-r$,
$\psi=r_x$.


\def\cprime{$'$}

\end{document}